\theoremstyle{definition}
\newtheorem{theorem}{Theorem}
\newtheorem{prop}[theorem]{Proposition}
\title{How bad are artifacts?:\\Analyzing the impact of speech enhancement errors on ASR}
\name{Kazuma Iwamoto$^{1}$, Tsubasa Ochiai$^{2,\star}$, Marc Delcroix$^{2}$, \\ Rintaro Ikeshita$^{2}$, Hiroshi Sato$^{2}$, Shoko Araki$^{2}$, Shigeru Katagiri$^{1}$}
\address{
  $^1$Doshisha University, $^2$NTT Corporation\thanks{$^{\star}$ Corresponding author. Email: tsubasa.ochiai.ah@hco.ntt.co.jp}}
\begin{document}
\maketitle
\begin{abstract}

It is challenging to improve automatic speech recognition (ASR) performance in noisy conditions with single-channel speech enhancement (SE).
In this paper, we investigate the causes of ASR performance degradation by decomposing the SE errors using orthogonal projection-based decomposition (OPD).
OPD decomposes the SE errors into noise and artifact components.
The artifact component is defined as the SE error signal that cannot be represented as a linear combination of speech and noise sources.
We propose manually scaling the error components to analyze their impact on ASR.
We experimentally identify the artifact component as the main cause of performance degradation, and we find that mitigating the artifact can greatly improve ASR performance.
Furthermore, we demonstrate that the simple observation adding (OA) technique (i.e., adding a scaled version of the observed signal to the enhanced speech) can monotonically increase the signal-to-artifact ratio under a mild condition.
Accordingly, we experimentally confirm that OA improves ASR performance for both simulated and real recordings.
The findings of this paper provide a better understanding of the influence of SE errors on ASR and open the door to future research on novel approaches for designing effective single-channel SE front-ends for ASR.

\end{abstract}
\noindent\textbf{Index Terms}: Single-channel speech enhancement, noise-robust speech recognition, estimate error decomposition

\section{Introduction}
\label{sec:introduction}

Building automatic speech recognition (ASR) systems that are robust to acoustic interference such as background noise and reverberation remains a challenging problem in speech processing.
Many studies have shown that multichannel speech enhancement (SE) front-ends, e.g., mask-based neural beamformers \cite{heymann2016neural,erdogan2016improved,boeddeker2018exploring}, can greatly improve ASR robustness \cite{barker2015third,barker2018fifth}.
However, in many practical situations, it is not always possible to use a microphone array.
Therefore, developing single-channel SE approaches for robust ASR is still an important research topic. 

Most single-channel SE approaches tend to only slightly improve or even degrade ASR performance \cite{yoshioka2015ntt,chen2018building,menne2019investigation,fujimoto2019one}.
Retraining the ASR back-end on enhanced speech \cite{chen2018building} and joint training of the SE front-end and ASR back-end \cite{menne2019investigation} have been investigated to improve the ASR performance of single-channel SE.
However, it is not always possible to modify the ASR back-end.
Moreover, the performance improvements are limited and much smaller than when using a microphone array.
How to design single-channel SE front-ends that can significantly improve ASR performance remains an open research question.

In general, it is often assumed that processing distortions induced by the single-channel SE are the cause of poor ASR performance.
However, there have been no detailed systematic analyses or interpretations of such distortions, particularly of their impact on ASR.
We argue that a deeper understanding of the impact of single-channel SE estimation errors on ASR is essential for designing better SE front-ends.

To this end, we propose a novel analysis by decomposing the SE estimation errors using orthogonal projections, which was previously proposed for speech enhancement and separation performance metrics \cite{vincent2006performance}.
Orthogonal projection-based error decomposition decomposes the SE errors into two components, i.e., the {\it noise} and {\it artifact} components ($\mathbf{e}_{\text{noise}}$ and $\mathbf{e}_{\text{artif}}$ in Figure~\ref{fig:decomposition}-(a)).
These two components are obtained by projecting the SE errors onto (1) a speech-noise subspace spanned by the speech and noise signals and (2) a subspace orthogonal to the speech-noise subspace.

The noise component consists of a linear combination of speech and noise signals, and thus it represents a signal that could naturally be observed.
We denote these as {\it natural} signals.
The effect of such natural signals on the ASR performance could be limited since similar noise components would naturally appear in the training samples.
On the other hand, the artifact component consists of a signal that cannot be represented by a linear combination of speech and noise signals, and thus it represents an artificial and {\it unnatural} signal.
Such unnatural signals may be very diverse and rarely appear in the training samples.
Therefore, we hypothesize that ASR would be more sensitive to the artifact components than to the noise components\footnote{The above consideration is motivated by the fact that the enhanced signal of the beamformer, which is known to significantly improve ASR, is a linear combination of the filtered multichannel observation signals, which may be considered a {\it natural} signal.}.

In this paper, we design a novel analysis scheme based on the orthogonal projection-based error decomposition to confirm our above hypotheses and better understand the impact of SE errors on ASR.
We propose manually controlling the ratio of the noise and artifact errors in the enhanced signals and experimentally observe the impact of each error component on the ASR performance.
With this analysis, we are able to prove the particularly negative impact of artifact errors on ASR performance as well as the relatively limited impact of the noise errors.

In addition, we explore the effectiveness of the observation adding (OA) technique as the post-processing for improving the ASR performance.
OA simply consists of adding back a scaled version of the observed signal to the enhanced speech.
It has been used as a practical approach to mitigating the auditory impact of processing distortions.
We propose an interpretation of this simple technique from the viewpoint of orthogonal projection analysis, which justifies its effectiveness in improving the ASR performance.

The experimental analyses proposed in this paper provide a novel insight on the causes of the limited ASR performance improvement induced by single-channel SE.
The results show that single-channel SE approaches that were known to be ineffective as ASR's front-end could have the potential to improve ASR performance by more carefully considering the artifacts.

\section{Error decomposition of estimate sources}

In this paper, we focus on the single-channel SE (noise reduction) task.
Let $\mathbf{y} \in \mathbb{R}^{T}$ denote an $T$-length time-domain waveform of the observed signal.
The observation signal is modeled as $\mathbf{y} = \mathbf{s} + \mathbf{n}$, where $\mathbf{s} \in \mathbb{R}^{T}$ denotes the source signal and $\mathbf{n} \in \mathbb{R}^{T}$ denotes the background noise signal.

The aim of SE is to reduce the noise signal from the observed signal.
Given the observed signal $\mathbf{y}$ as an input, the enhanced signal $\widehat{\mathbf{s}} \in \mathbb{R}^{T}$ is estimated as $\widehat{\mathbf{s}} = \text{SE}(\mathbf{y})$, where $\text{SE}(\cdot)$ denotes the SE processing carried out, e.g., by a neural network.

\subsection{Orthogonal projection-based decomposition}
\label{sec:decomposition}

Let us review the orthogonal projection-based error decomposition approach that was first introduced for designing SE evaluation metrics.
The estimated signal $\widehat{\mathbf{s}}$ inevitably contains estimation errors.
Vincent et al. \cite{vincent2006performance} proposed decomposition of the estimated signal in three terms as\footnote{Unlike Vincent et al. \cite{vincent2006performance}, this paper focuses on the single-channel SE (noise reduction) task without interfering speakers, and thus we reformulate the equations by focusing on the noise and artifact errors.}:
\begin{align}
    \widehat{\mathbf{s}} = \mathbf{s}_{\text{target}} + \mathbf{e}_{\text{noise}} + \mathbf{e}_{\text{artif}},
    \label{eq:decomp}
\end{align}
where $\mathbf{s}_{\text{target}} \in \mathbb{R}^{T}$ is the target source component, and $\mathbf{e}_{\text{noise}} \in \mathbb{R}^{T}$  and $\mathbf{e}_{\text{artif}} \in \mathbb{R}^{T}$ denote the noise and artifact error components, respectively.

We can derive this decomposition using orthogonal projections.
Let $\mathbf{s}^{\tau}$ and $\mathbf{n}^{\tau}$ be the source and noise signals delayed by $\tau$.
We denote by $\mathbf{P}_{\mathbf{s}} \in \mathbb{R}^{T \times T}$ the orthogonal projection matrix onto the subspace spanned by the source signals $\{ \mathbf{s}^{\tau} \}_{\tau = 0}^{L-1}$, where $L-1$ is the number of maximum delay allowed.
Similarly, $\mathbf{P}_{\mathbf{s},\mathbf{n}} \in \mathbb{R}^{T \times T}$ is the orthogonal projection matrix onto the subspace spanned by the source and noise signals $\{ \mathbf{s}^{\tau}, \mathbf{n}^{\tau} \}_{\tau = 0}^{L-1}$.
These matrices can be obtained as follows:
\begin{align}
    \mathbf{P}_{\mathbf{s}} &\coloneqq \mathbf{A}_{\mathbf{s}}(\mathbf{A}_{\mathbf{s}}^{\mathsf{T}} \mathbf{A}_{\mathbf{s}})^{-1} \mathbf{A}_{\mathbf{s}}^{\mathsf{T}}, \\
    \mathbf{P}_{\mathbf{s},\mathbf{n}} &\coloneqq \mathbf{A}_{\mathbf{s}, \mathbf{n}}(\mathbf{A}_{\mathbf{s}, \mathbf{n}}^{\mathsf{T}} \mathbf{A}_{\mathbf{s}, \mathbf{n}})^{-1} \mathbf{A}_{\mathbf{s}, \mathbf{n}}^{\mathsf{T}},
\end{align}
where $\mathbf{A}_{\mathbf{s}} \coloneqq [ \mathbf{s}^{\tau=0}, \ldots, \mathbf{s}^{\tau=L-1} ] \in \mathbb{R}^{T \times L}$ and $\mathbf{A}_{\mathbf{s}, \mathbf{n}} \coloneqq [ \mathbf{s}^{\tau=0}, \ldots, \mathbf{s}^{\tau=L-1}, \mathbf{n}^{\tau=0}, \ldots, \mathbf{n}^{\tau=L-1} ] \in \mathbb{R}^{T \times 2L}$.

The decomposed terms in Eq.~\eqref{eq:decomp} can be obtained using the projection matrices as:
\begin{align}
    \mathbf{s}_{\text{target}} &= \mathbf{P}_{\mathbf{s}} \widehat{\mathbf{s}},\label{eq:target} \\
    \mathbf{e}_{\text{noise}} &= \mathbf{P}_{\mathbf{s},\mathbf{n}} \widehat{\mathbf{s}} - \mathbf{P}_{\mathbf{s}} \widehat{\mathbf{s}}, \label{eq:noise} \\
    \mathbf{e}_{\text{artif}} &= \widehat{\mathbf{s}} - \mathbf{P}_{\mathbf{s},\mathbf{n}} \widehat{\mathbf{s}}. \label{eq:artif}
\end{align}
Figure~\ref{fig:decomposition}-(a) illustrates the signal decomposition, where, for illustration simplicity, we assume no signal delay, i.e., $L=1$.

The above decomposition was originally proposed to derive the SE evaluation metrics \cite{vincent2006performance}, such as 1) signal-to-distortion ratio (SDR), 2) signal-to-noise ratio (SNR), and 3) signal-to-artifact ratio (SAR).

\begin{figure}[t]
  \centering
  \includegraphics[width=0.8\linewidth]{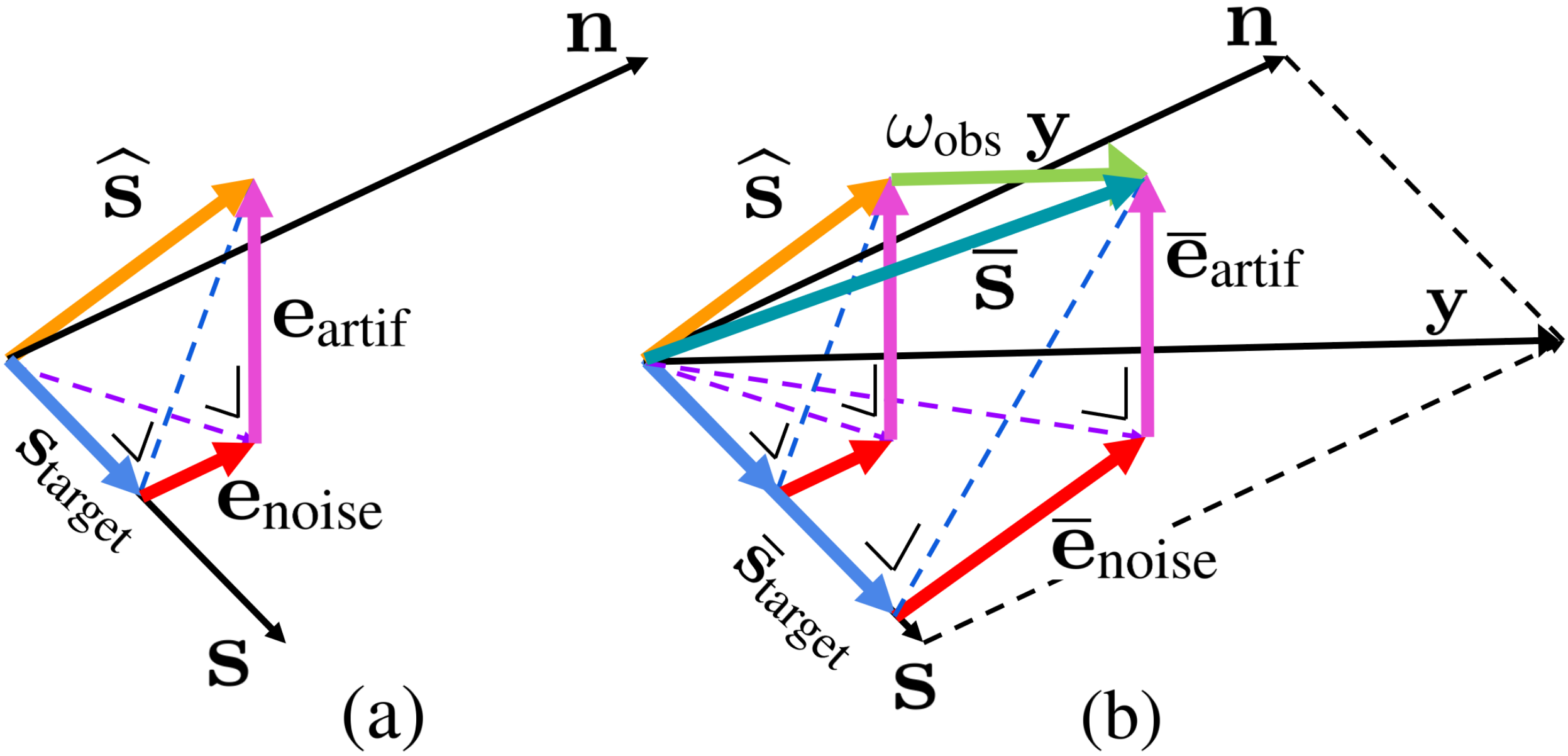}
\vspace{-3mm}
  \caption{Illustrations of (a) orthogonal projection-based decomposition and (b) effect of observation adding.}
\vspace{-3mm}
  \label{fig:decomposition}
\end{figure}

\section{Analysis schemes}
\label{sec:analysis}

From Eq.~\eqref{eq:artif}, the artifact error component $\mathbf{e}_{\text{artif}}$ represents an {\it unnatural} signal generated by the SE systems, which cannot be expressed as the linear combination of the {\it natural} signals (i.e., source and noise signals).
We hypothesize that such unnatural error components would have bad effects on the ASR performance.
Below in Section~\ref{sec:dsa}, we propose an experimental scheme to investigate the relative impact of the error components on ASR.
In Section~\ref{sec:oa}, we also give an interpretation to the effect of the OA technique, which can mitigate the impact of the artifact errors under a mild condition.

\subsection{Controlling SNR and SAR by directly scaling error components}
\label{sec:dsa}

To measure the impact of the different error components, we propose modifying enhanced signals by varying the scale of the error components.

After decomposing the estimated signal with the orthogonal projection-based decomposition described in Section \ref{sec:decomposition}, we synthesized a modified version of the enhanced signal $\widehat{\mathbf{s}}_{\omega} \in \mathbb{R}^{T}$ by directly scaling (increasing/decreasing) the error components $\mathbf{e}_{\text{noise}}, \mathbf{e}_{\text{artif}}$ as:
\begin{align}
    \widehat{\mathbf{s}}_{\omega} &= \mathbf{s}_{\text{target}} + \omega_{\text{noise}}\ \mathbf{e}_{\text{noise}} + \omega_{\text{artif}}\ \mathbf{e}_{\text{artif}},\label{eq:dsa}
\end{align}
where $\omega_{\text{noise}}$ and $\omega_{\text{artif}}$ are the parameters that control the amount of the noise and artifact error components.

We changed the values of $\omega_{\text{noise}}$ and $\omega_{\text{artif}}$ to obtain various modified enhanced signals $\widehat{\mathbf{s}}_{\omega}$. This allows us to control the SNR and SAR values while retaining the same target source component $\mathbf{s}_{\text{target}}$.
By inputting such modified enhanced signals to the ASR systems, we can directly measure the impact of each error component on the ASR performance.
In the following experiment, we refer to this modification approach as direct scaling analysis (DSA).
Note that DSA requires having access to the reference source and noise signals, and thus it cannot be applied for real recordings.

\subsection{Improving SAR with observation adding technique}
\label{sec:oa}

OA simply consists of adding back a scaled version of the observed signal to the enhanced speech.
The modified enhanced signal $\overline{\mathbf{s}} \in \mathbb{R}^{T}$ is computed as:
\begin{align}
    \overline{\mathbf{s}} &= \widehat{\mathbf{s}} + \omega_{\text{obs}}\ \mathbf{y} = \widehat{\mathbf{s}} + \omega_{\text{obs}}\ \mathbf{s} + \omega_{\text{obs}}\ \mathbf{n},\label{eq:oa}
\end{align}
where $\omega_{\text{obs}} \geq 0$ is a parameter that controls the amount of the added observation signal.
Figure~\ref{fig:decomposition}-(b) illustrates the effect of observation adding in modifying the target source and noise error components, but not adding artifacts ($\overline{\mathbf{e}}_{\text{artif}} = \mathbf{e}_{\text{artif}}$).
Unlike DSA, it does not require having access to the reference speech and noise signals, and thus can be applied for real recordings.

OA is a simple technique often used to reduce the auditory nonlinear distortion (i.e., musical noise \cite{cappe1994elimination}) of the enhanced signals generated by binary masking \cite{lyon1983computational} and spectral subtraction \cite{boll1979suppression}.
However, to the best of our knowledge, such a technique has not been investigated as an approach to improve the ASR performance of recent single-channel SE systems such as neural network-based ones.
Moreover, its effect on the decomposed estimation errors has not been considered.
In this paper, we show that the OA technique has a potential to mitigate the impact of the artifact errors, which could contribute to improving the ASR performance based on our hypothesis.

In Section~\ref{sec:experiment2}, we experimentally observe that
the OA technique improves the SAR of the enhanced signal $\widehat{\mathbf{s}} = \mathrm{SE}(\mathbf{y})$ for all the evaluated utterances, when using a time-domain SE neural network that is trained with the phase-aware scale-dependent loss function.
This empirical observation can also be explained analytically as follows.
According to Proposition~\ref{prop:OA}, which we provide below, the OA technique improves the SAR of $\widehat{\mathbf{s}} = \mathrm{SE}(\mathbf{y})$ if the mild condition $\langle \widehat{\mathbf{s}}, \mathbf{y} \rangle > 0$ holds.

This condition should be met even for SE front-ends that would not be very accurate, and even more likely to be met for the recent high-quality (phase-aware) SE systems.
Assuming such SE systems, we can roughly expect $\widehat{\mathbf{s}} \approx \mathbf{s} + \varepsilon \mathbf{n} + \widehat{\mathbf{e}}_{\text{artif}}$ with $\varepsilon \in \mathbb{R}$, where $|\varepsilon|$ is small.
In this case,
by $\langle \mathbf{y}, \widehat{\mathbf{e}}_{\text{artif}} \rangle = 0$
and under the natural assumption $\langle \mathbf{s}, \mathbf{n} \rangle \approx \mathbf{0}$,
we have
$
\langle \widehat{\mathbf{s}}, \mathbf{y} \rangle
\approx 
\langle \mathbf{s} + \varepsilon \mathbf{n}, \mathbf{s} + \mathbf{n} \rangle
\approx 
\| \mathbf{s} \|^2 + \varepsilon \| \mathbf{n} \|^2
$.
Since $|\varepsilon|$ is small, we can expect  
$\langle \widehat{\mathbf{s}}, \mathbf{y} \rangle > 0$
and that OA almost always improves SAR.

\begin{prop}
\label{prop:OA}
The OA technique improves the SAR of the original enhanced signal $\widehat{\mathbf{s}} = \mathrm{SE}(\mathbf{y})$ if 
it satisfies $\langle \widehat{\mathbf{s}}, \mathbf{y} \rangle > 0$.
\end{prop}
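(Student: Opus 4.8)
The plan is to argue directly from the orthogonal-projection definition of the SAR and to exploit the fact that the added observation signal lies entirely inside the speech--noise subspace. Recall that, with the decomposition of Eqs.~\eqref{eq:decomp}--\eqref{eq:artif}, the SAR of an estimate $\widehat{\mathbf{s}}$ is $\mathrm{SAR}(\widehat{\mathbf{s}}) = 10\log_{10}\!\big(\|\mathbf{s}_{\text{target}} + \mathbf{e}_{\text{noise}}\|^2 / \|\mathbf{e}_{\text{artif}}\|^2\big)$, which, using $\mathbf{s}_{\text{target}} + \mathbf{e}_{\text{noise}} = \mathbf{P}_{\mathbf{s},\mathbf{n}}\widehat{\mathbf{s}}$ and $\mathbf{e}_{\text{artif}} = (\mathbf{I} - \mathbf{P}_{\mathbf{s},\mathbf{n}})\widehat{\mathbf{s}}$, equals $10\log_{10}\!\big(\|\mathbf{P}_{\mathbf{s},\mathbf{n}}\widehat{\mathbf{s}}\|^2 / \|(\mathbf{I}-\mathbf{P}_{\mathbf{s},\mathbf{n}})\widehat{\mathbf{s}}\|^2\big)$. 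So it suffices to track how the numerator and denominator of this ratio change when $\widehat{\mathbf{s}}$ is replaced by $\overline{\mathbf{s}} = \widehat{\mathbf{s}} + \omega_{\text{obs}}\mathbf{y}$.

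First I would observe that $\mathbf{y} = \mathbf{s} + \mathbf{n} = \mathbf{s}^{\tau=0} + \mathbf{n}^{\tau=0}$ lies in the column space of $\mathbf{A}_{\mathbf{s},\mathbf{n}}$, hence $\mathbf{P}_{\mathbf{s},\mathbf{n}}\mathbf{y} = \mathbf{y}$. Applying $\mathbf{I} - \mathbf{P}_{\mathbf{s},\mathbf{n}}$ to $\overline{\mathbf{s}}$ then gives $\overline{\mathbf{e}}_{\text{artif}} = (\mathbf{I} - \mathbf{P}_{\mathbf{s},\mathbf{n}})\overline{\mathbf{s}} = (\mathbf{I} - \mathbf{P}_{\mathbf{s},\mathbf{n}})\widehat{\mathbf{s}} = \mathbf{e}_{\text{artif}}$, so the denominator of the SAR is left unchanged by OA (this is precisely the invariance illustrated in Figure~\ref{fig:decomposition}-(b)). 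It therefore remains only to show that the numerator $\|\mathbf{P}_{\mathbf{s},\mathbf{n}}\overline{\mathbf{s}}\|^2 = \|\mathbf{P}_{\mathbf{s},\mathbf{n}}\widehat{\mathbf{s}} + \omega_{\text{obs}}\mathbf{y}\|^2$ strictly increases with $\omega_{\text{obs}}$.

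Expanding, $\|\mathbf{P}_{\mathbf{s},\mathbf{n}}\widehat{\mathbf{s}} + \omega_{\text{obs}}\mathbf{y}\|^2 = \|\mathbf{P}_{\mathbf{s},\mathbf{n}}\widehat{\mathbf{s}}\|^2 + 2\omega_{\text{obs}}\langle \mathbf{P}_{\mathbf{s},\mathbf{n}}\widehat{\mathbf{s}}, \mathbf{y}\rangle + \omega_{\text{obs}}^2\|\mathbf{y}\|^2$. Using that $\mathbf{P}_{\mathbf{s},\mathbf{n}}$ is self-adjoint together with $\mathbf{P}_{\mathbf{s},\mathbf{n}}\mathbf{y} = \mathbf{y}$, the cross term simplifies: $\langle \mathbf{P}_{\mathbf{s},\mathbf{n}}\widehat{\mathbf{s}}, \mathbf{y}\rangle = \langle \widehat{\mathbf{s}}, \mathbf{P}_{\mathbf{s},\mathbf{n}}\mathbf{y}\rangle = \langle \widehat{\mathbf{s}}, \mathbf{y}\rangle > 0$ by the hypothesis. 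Hence the derivative of the numerator in $\omega_{\text{obs}}$ is $2\langle\widehat{\mathbf{s}},\mathbf{y}\rangle + 2\omega_{\text{obs}}\|\mathbf{y}\|^2 \ge 2\langle\widehat{\mathbf{s}},\mathbf{y}\rangle > 0$ for every $\omega_{\text{obs}} \ge 0$; since the denominator is constant and $\log$ is increasing, $\mathrm{SAR}(\widehat{\mathbf{s}} + \omega_{\text{obs}}\mathbf{y})$ is strictly (indeed monotonically) increasing on $[0,\infty)$, so $\mathrm{SAR}(\overline{\mathbf{s}}) > \mathrm{SAR}(\widehat{\mathbf{s}})$ for any $\omega_{\text{obs}} > 0$.

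There is essentially no hard step: the only thing to handle with care is the projection bookkeeping, i.e.\ recognizing that $\mathbf{y}$ is fixed by $\mathbf{P}_{\mathbf{s},\mathbf{n}}$, which is what simultaneously yields the artifact invariance and the identity $\langle \mathbf{P}_{\mathbf{s},\mathbf{n}}\widehat{\mathbf{s}},\mathbf{y}\rangle = \langle\widehat{\mathbf{s}},\mathbf{y}\rangle$. One should also state explicitly which SAR convention from \cite{vincent2006performance} is meant, so that ``numerator $= \|\mathbf{P}_{\mathbf{s},\mathbf{n}}\widehat{\mathbf{s}}\|^2$'' is unambiguous; under the single-channel reformulation adopted in this paper this is immediate.
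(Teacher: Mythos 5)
Your proof is correct and follows essentially the same route as the paper's: both rest on $\mathbf{P}_{\mathbf{s},\mathbf{n}}\mathbf{y}=\mathbf{y}$, the resulting invariance $\overline{\mathbf{e}}_{\text{artif}}=\widehat{\mathbf{e}}_{\text{artif}}$, and the self-adjointness identity $\langle\mathbf{P}_{\mathbf{s},\mathbf{n}}\widehat{\mathbf{s}},\mathbf{y}\rangle=\langle\widehat{\mathbf{s}},\mathbf{y}\rangle$ to show the numerator strictly increases while the denominator is fixed. The only cosmetic difference is that you phrase the final step as monotonicity in $\omega_{\text{obs}}$ via the derivative, whereas the paper expands the quadratic and checks the sign directly.
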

\begin{proof}
Let $\overline{\mathbf{e}}_{\text{artif}}$ and $\widehat{\mathbf{e}}_{\text{artif}}$ be the artifact errors corresponding to the modified enhanced signal $\overline{\mathbf{s}}$ given by Eq.~\eqref{eq:oa} and original enhanced signal $\widehat{\mathbf{s}} = \mathrm{SE}(\mathbf{y})$, respectively.
Based on Eqs.~\eqref{eq:target}, \eqref{eq:noise}, \eqref{eq:artif}, \eqref{eq:oa}, and the definition of the SAR \cite{vincent2006performance} (i.e., $\text{SAR} \coloneqq 10 \log_{10} \frac{\| \mathbf{s}_{\text{target}} + \mathbf{e}_{\text{noise}} \|^{2}}{\| \mathbf{e}_{\text{artif}} \|^{2}}$), the SAR improvement (denoted as $\text{SARi}$) can be computed as:
\begin{align}
    \nonumber
    \text{SARi} 
    &= 
        10 \log_{10} \frac{\| \mathbf{P}_{\mathbf{s}, \mathbf{n}} \overline{\mathbf{s}} \|^{2}}{\| \overline{\mathbf{e}}_{\text{artif}} \|^{2}} - 10 \log_{10} \frac{\| \mathbf{P}_{\mathbf{s}, \mathbf{n}} \widehat{\mathbf{s}} \|^{2}}{\| \widehat{\mathbf{e}}_{\text{artif}} \|^{2}},
    \\
    \nonumber
    & = 
        10 \log_{10} \frac{\| \mathbf{P}_{\mathbf{s}, \mathbf{n}} \widehat{\mathbf{s}} + \omega_{\text{obs}} \mathbf{y} \|^{2}}{\| \mathbf{P}_{\mathbf{s}, \mathbf{n}} \widehat{\mathbf{s}} \|^{2}},
    \\
    \nonumber
    &= 
        10 \log_{10}
        \left[ 
            1
            + 
            \frac{
                \omega_{\text{obs}}^{2} \| \mathbf{y} \|^2 + 2 \omega_{\text{obs}} \langle \mathbf{P}_{\mathbf{s}, \mathbf{n}} \widehat{\mathbf{s}}, \mathbf{y} \rangle
            }{
                \| \mathbf{P}_{\mathbf{s}, \mathbf{n}} \widehat{\mathbf{s}} \|^2
            }
        \right],
\end{align}
where we used 
$\mathbf{P}_{\mathbf{s}, \mathbf{n}} \mathbf{y} = \mathbf{y}$
and
$\overline{\mathbf{e}}_{\text{artif}} = \widehat{\mathbf{e}}_{\text{artif}}$
in the second equality.
Thus,
$\text{SARi} > 0$ holds if
$\langle \mathbf{P}_{\mathbf{s}, \mathbf{n}} \widehat{\mathbf{s}}, \mathbf{y} \rangle > 0$.
This sufficient condition can be rewritten as
$
\langle \mathbf{P}_{\mathbf{s}, \mathbf{n}} \widehat{\mathbf{s}}, \mathbf{y} \rangle
= 
\langle \widehat{\mathbf{s}}, \mathbf{P}_{\mathbf{s}, \mathbf{n}} \mathbf{y} \rangle
=
\langle \widehat{\mathbf{s}}, \mathbf{y} \rangle  > 0
$,
which concludes the proof.
\end{proof}

\section{Experiments}
\label{sec:experiment}

\subsection{Conditions}

\textbf{Evaluated Data:}
We created a dataset of simulated reverberant noisy speech signals from the Wall Street Journal (WSJ0) corpus for speech source \cite{garofalo2007csr} and the CHiME-3 corpus for noise source \cite{barker2015third}.
To create the reverberant speech source, we randomly generated simulated room impulse responses based on the image method \cite{allen1979image}, where we set the reverberation time (T60) between 0.2 s and 0.6 s.

We created 30,000, 5,000, and 5,000 noisy speech signals for training, development, and evaluation sets, respectively.
The input SNR of the training and development sets was randomly selected for each utterance between 0 dB and 10 dB, while the input SNR of the evaluation set was set to 0 dB.
The speech sources for training, development, and evaluation sets were selected from WSJ0’s training set ``si\_tr\_ s,'' development set ``si\_dt\_05'' and evaluation set ``si\_et\_05,’' respectively.
The noise data of the CHiME-3 corpus was divided into 3 subsets for training (80 \%), development (10 \%), and evaluation (10 \%).
We used the fifth-channel signals of CHiME-3's noise data for generating noisy speech signals.
The above simulated dataset\footnote{We chose this setup instead of CHiME-3 official dataset to be able to extend the data to multi-speaker conditions in our future works.} is used to analyze the relation between the SE metrics (i.e., SDR, SNR, SAR \cite{vincent2006performance}) and the ASR metric (i.e., word error rate (WER)).
In addition, we also used the real-recorded speech data of the CHiME-3 dataset (``et05\_real'') to confirm the results on real recordings.

\textbf{Speech enhancement front-end:}
We adopted the neural network-based time-domain denoising network, i.e., Denoising-TasNet \cite{kinoshita2020improving}, as the SE front-end for the evaluation.
Our implementation is based on the Asteroid's Conv-TasNet implementation \cite{luo2019conv,pariente2020asteroid}.
Unlike the original study \cite{luo2019conv}, we adopted the classical (phase-aware and scale-dependent) SNR loss \cite{roux2019sdr} to retain the phase and scale information of the target source component.
By following the notations of \cite{luo2019conv}, we set the hyperparameters as: $N=512$, $L=16$, $B=128$, $H=512$, $P=3$, $X=8$, and $R=3$.
We adopted the Adam algorithm \cite{kingma2015adam}, with an initial learning rate of 0.001, and the gradient clipping \cite{pascanu2013difficulty}.
We stopped the training procedure after 100 epochs.

\textbf{Speech recognition back-end:}
To evaluate ASR performance, we created a deep neural network-hidden Markov model (DNN-HMM) hybrid ASR system based on Kaldi’s standard recipe \cite{hinton2012deep,povey2011kaldi}. 
The system was trained using the lattice-free maximum mutual information (MMI) criterion \cite{povey2016purely} with the reverberant noisy speech in the training set.
We used a trigram language model for decoding.

As described in Section~\ref{sec:introduction}, it is not always possible to modify the ASR back-end due to the need to use an already deployed ASR back-end, the cost of training and maintaining an ASR system for different application scenarios, etc.
Thus, this paper focuses on experiments using ASR system trained independently of the SE front-end, i.e., trained with reverberant noisy speech by following the multi-condition training strategy \cite{vincent2017analysis}.

\vspace{-1mm}
\subsection{Experiment 1: Evaluation with direct scaling analysis}

\begin{figure}[t]
  \centering
  \includegraphics[width=0.9\linewidth]{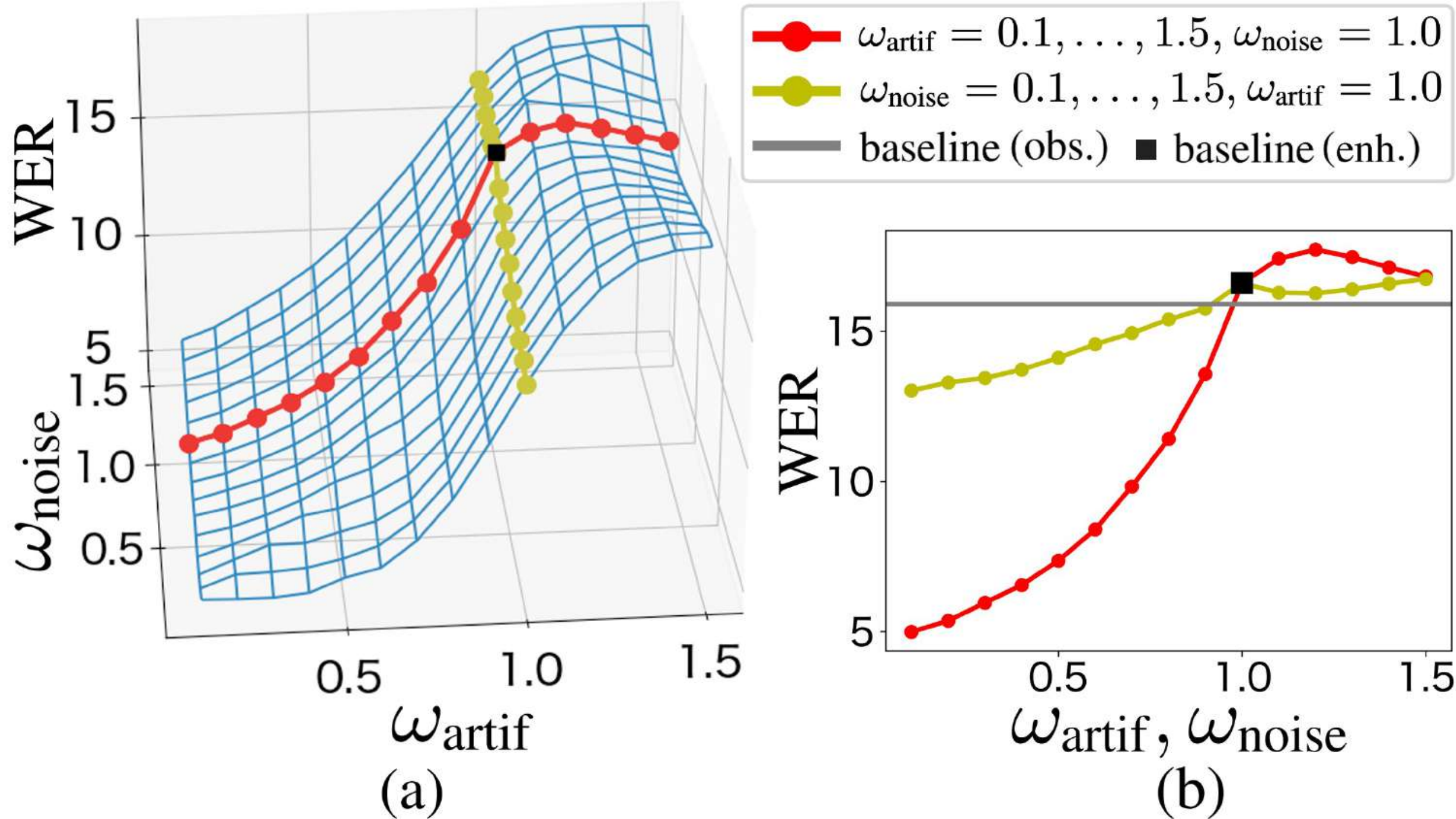}
\vspace{-3mm}
  \caption{Results of DSA approach (WER [\%] (lower is better)).}\label{fig:results_of_DSA}
\vspace{-3mm}
\end{figure}

Figure~\ref{fig:results_of_DSA}-(a) is a 3D plot of the WER scores for the evaluated enhanced signals modified by directly scaling the ratio of the noise and artifact errors based on the DSA method described in Section~\ref{sec:dsa}.
The results were obtained by varying the values of $\omega_{\text{noise}}$ and $\omega_{\text{artif}}$ in Eq.~\eqref{eq:dsa} between $0.0$ and $1.5$. 
Figure~\ref{fig:results_of_DSA}-(b) is the corresponding 2D plot obtained by varying only one of the $\omega_{\text{noise}}$ and $\omega_{\text{artif}}$ weights.
Each curve in the 2D plot corresponds to the curve of the same color in the 3D plot.
In all the figures of this paper, the gray line represents the baseline score of the observed signals, and the square symbol represents the baseline score of the original enhanced signals.

From the figure, we confirm that the original enhanced signals do not improve the ASR performance even compared to the observed signals.
This is similar to what has been reported in previous studies \cite{yoshioka2015ntt,chen2018building,menne2019investigation}.
In addition, Figure~\ref{fig:results_of_DSA} shows that with DSA, it is possible to greatly improve ASR performance by reducing the artifact error component.
In contrast, ASR performance is affected much less by scaling the noise error component.
These results confirm our hypothesis that, among the two types of SE errors, the artifact errors have a larger impact on the degradation of ASR performance.

\subsection{Experiment 2: Evaluation with observation adding}
\label{sec:experiment2}

\begin{figure}[t]
  \centering
  \includegraphics[width=0.95\linewidth]{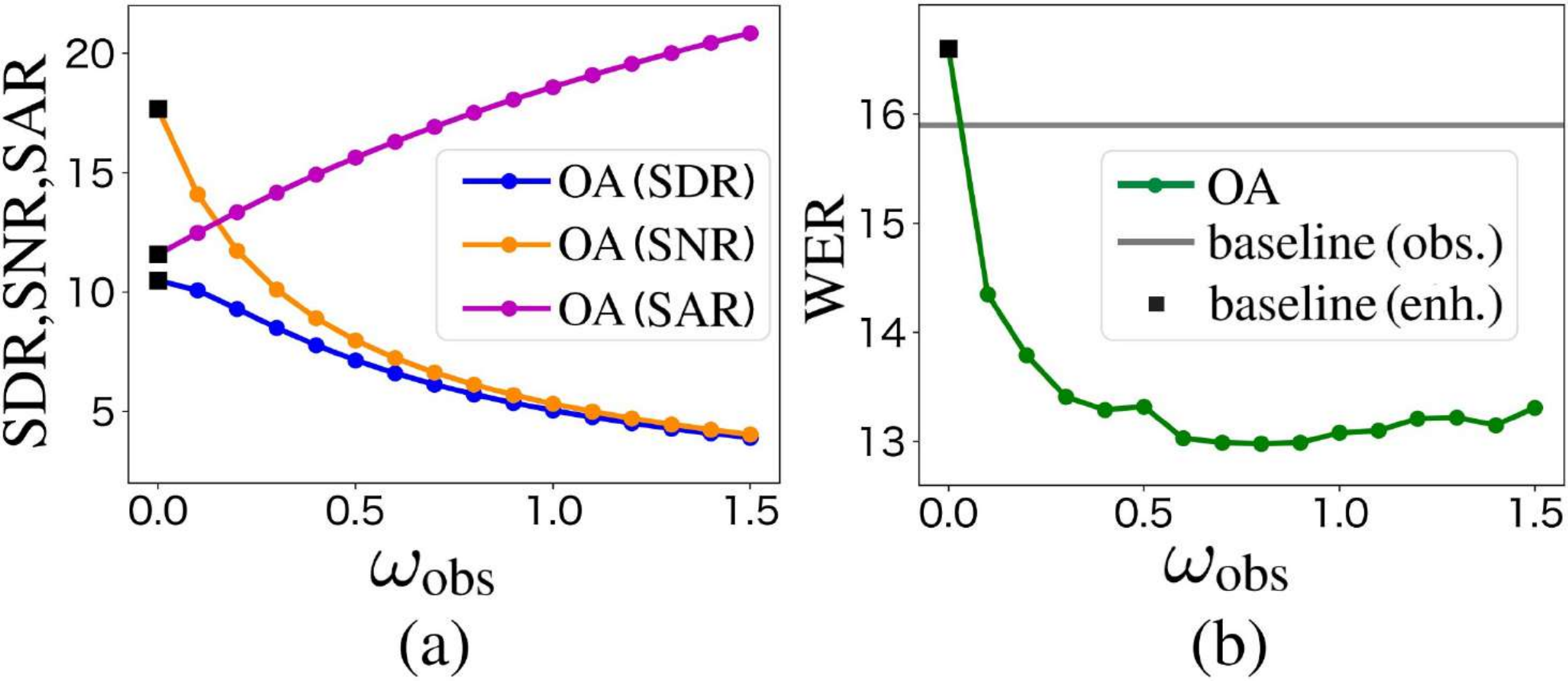}
\vspace{-3mm}
  \caption{Results of OA approach (SDR, SNR, SAR [dB] (higher is better) and WER [\%] (lower is better)).}\label{fig:results_of_OA}
\vspace{-2mm}
\end{figure}

\begin{figure}[t]
  \centering
  \includegraphics[width=0.9\linewidth]{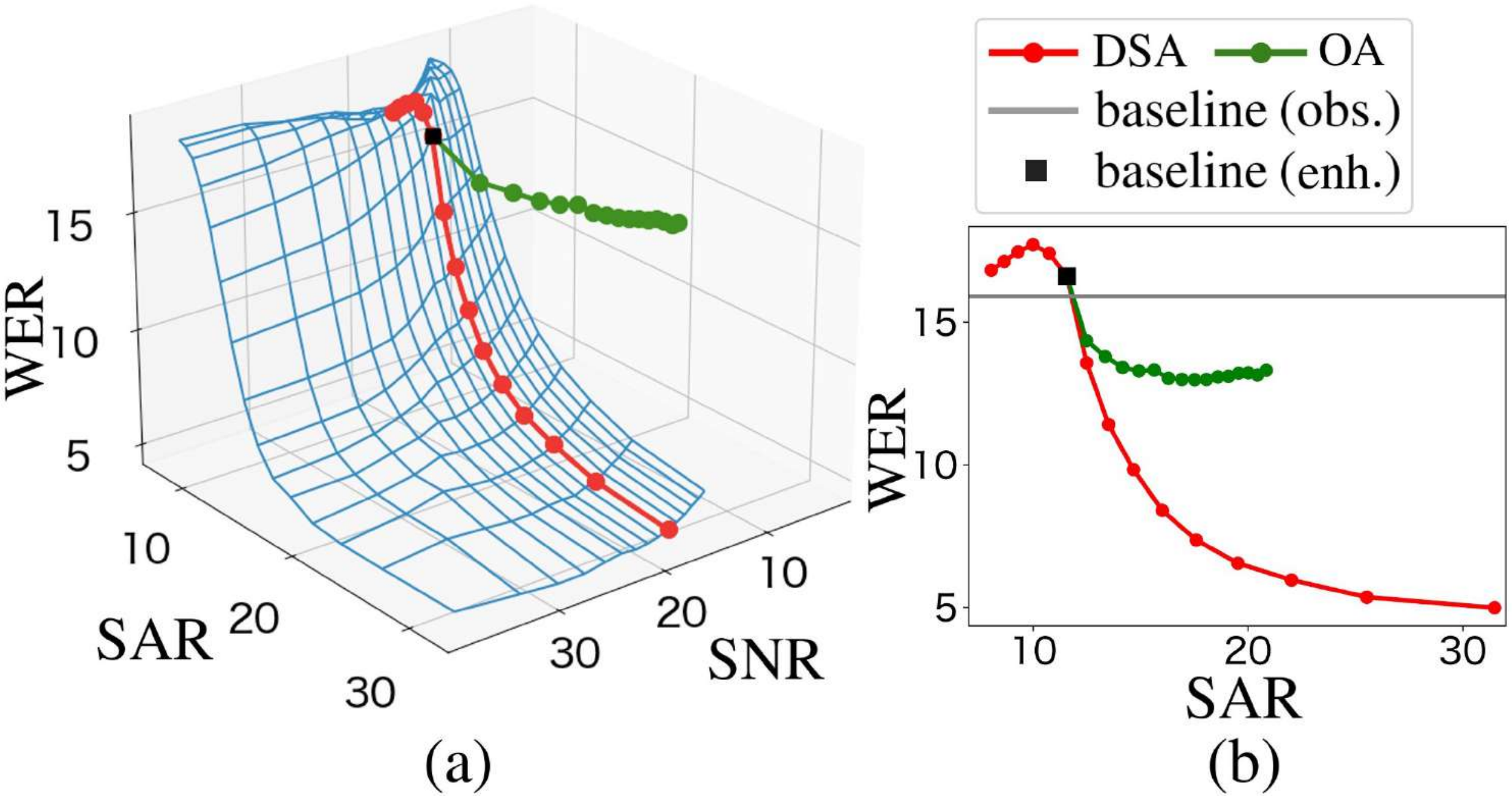}
\vspace{-3mm}
  \caption{Comparison of DSA and OA approaches (WER [\%]).}\label{fig:comparison}
\vspace{-3mm}
\end{figure}

Figure~\ref{fig:results_of_OA} shows the SDR, SNR, SAR, and WER scores for the enhanced signals modified with the OA method described in Section~\ref{sec:oa}. 
The results were obtained by varying the values of $\omega_{\text{obs}}$ in Eq.~\eqref{eq:oa} between 0.0 and 1.5.
Each square symbol ($\omega_{\text{obs}} = 0.0$) represents the score of the original enhanced signals without OA.

The figure shows that as $\omega_{\text{obs}}$ increases (i.e., we add more of the observation signal), the SDR and SNR decrease while the SAR monotonically increases.
Moreover, we observe that the WER improves in conjunction with the SAR improvement\footnote{We confirm that as one would expect, the WER further degrades (approaches that of observed signals) when we increase the ratio of the observation signal by setting $\omega_{\text{obs}}$ to larger values.}.
With the OA method, we can outperform both the baseline observed signals and original enhanced signals.
These results support again our hypothesis that the artifact component has a worse impact on the ASR performance compared to the noise component.
By decreasing the ratio of the artifact errors (i.e., increasing SAR), we could improve the ASR performance of single-channel SE front-ends.

\subsection{Experiment 3: Comparison of DSA and OA}

Both DSA and OA methods described in Section~\ref{sec:dsa} and \ref{sec:oa} altered the SNR and SAR of the evaluated signals by modifying the original enhanced signals.
Figure~\ref{fig:comparison} plots the WER as a function of the SNR and SAR for both experiments.
The red curve corresponds to the result with the DSA method ($\omega_{\text{noise}} = 1.0$), while the green curve corresponds to that with the OA method.

From the figure, we can again confirm that both methods improve the SAR scores and accordingly improve the WER scores.
The result of the OA method (i.e., green curve) shows a smaller WER improvement compared to that of the DSA method (i.e., red curve).
This is probably because the OA method successfully increased the SAR of the evaluated utterances while it decreased the SNR in return for the SAR improvement, as seen in Figures~\ref{fig:results_of_OA}-(a) and \ref{fig:comparison}-(a).

We confirm that the OA method is a practical and simple method that can reduce the negative impact of artifacts on ASR performance.
However, to further improve the ASR performance for the single-channel SE front-ends, it would be worth while to investigate alternative training schemes and objectives, which focus on improving the SAR without degrading the SNR.

\subsection{Experiment 4: Evaluation with real recordings}

\begin{figure}[t]
  \centering
    \includegraphics[width=0.77\linewidth]{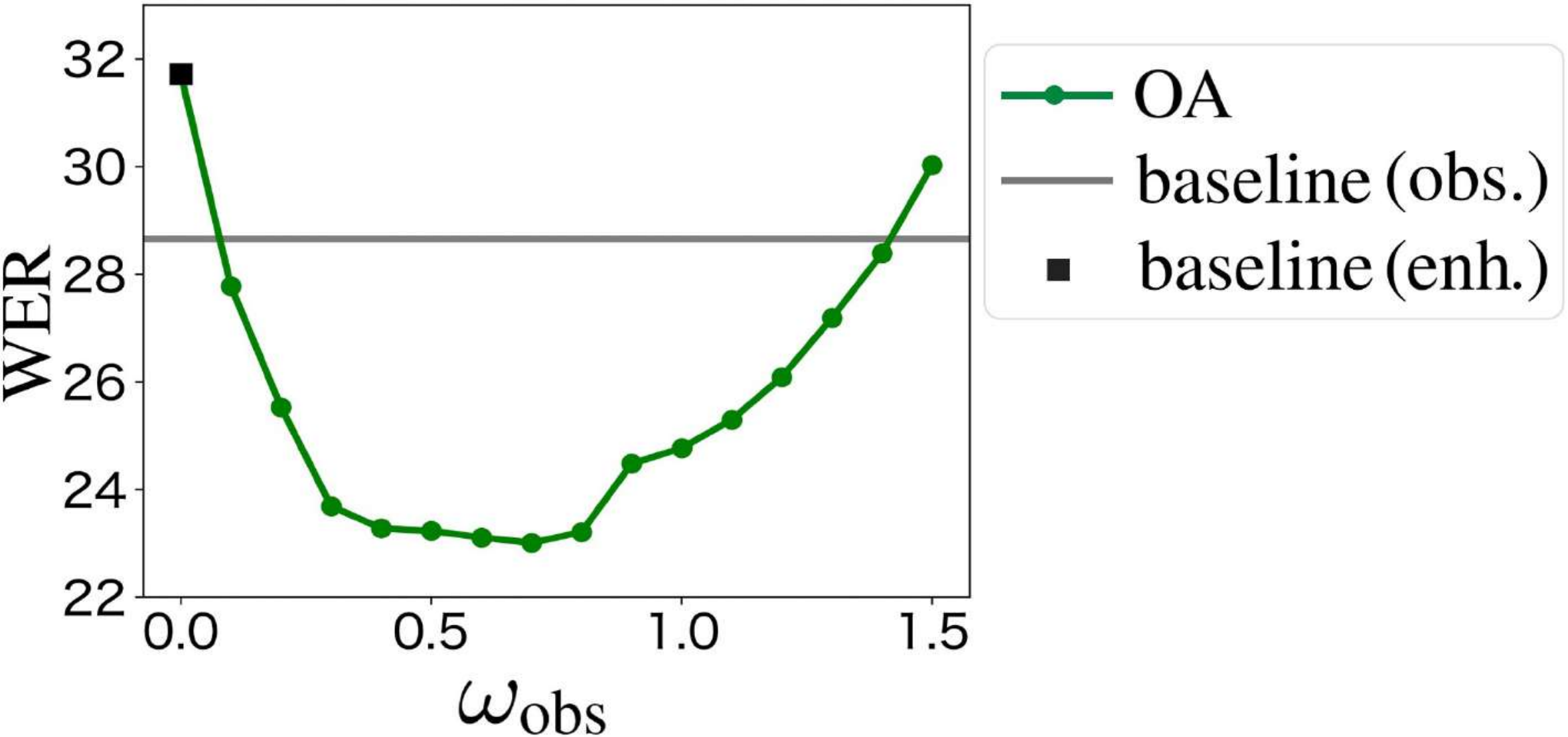}
\vspace{-3mm}
  \caption{Result of OA approach for real recordings (WER [\%]).} \label{fig:real_recording}
\vspace{-3.5mm}
\end{figure}

Finally, we evaluated the effectiveness of the OA method for real recordings.
Figure~\ref{fig:real_recording} shows the WER scores for the enhanced signals modified with the OA method.
We observed that the OA method, which was shown to improve SAR in Section~\ref{sec:experiment2}, is also effective at reducing WER when applied to real recordings.
This suggests that the findings of the error analysis performed on simulated data, i.e., the importance of mitigating artifact errors, would hold for real recordings.

In addition, we can observe that the OA method was not overly sensitive to the value of $\omega_{\text{obs}}$, since it could provide about
20 \% relative WER improvement for scaling $\omega_{\text{obs}}$ in the range of [0.3, 0.8] on the popular CHiME-3 real-recording test set.
This would attest the effectiveness of the OA method in real context.

\vspace{-1mm}
\section{Conclusion}

In this paper, we analyzed the causes of the limited improvement in ASR performance induced by single-channel SE.
We performed orthogonal projection-based decomposition of SE errors and experimentally found that the impact of noise errors is relatively limited, while the impact of artifact errors is particularly detrimental to ASR.

We investigated the OA method as a simple approach to mitigating the influence of artifact errors on ASR.
We demonstrated that OA can monotonically increase the SAR value under a mild condition.
Furthermore, we experimentally confirmed our interpretation of the OA method and its effectiveness in improving the ASR performance, even for real recordings.

We believe that the theoretical interpretations and experimental findings of this paper will give a novel insight into the issues of a single-channel SE front-end for ASR and open novel research directions for single-channel robust ASR.

In our future work, we plan to investigate alternative SE training schemes that consider the importance of minimizing artifacts.
Furthermore, we will extend our experiments to other SE front-ends and ASR back-ends trained on enhanced data.

\vfill
\pagebreak

\bibliographystyle{IEEEtran}
\bibliography{mybib}

\begin{thebibliography}{10}
\providecommand{\url}[1]{#1}
\csname url@samestyle\endcsname
\providecommand{\newblock}{\relax}
\providecommand{\bibinfo}[2]{#2}
\providecommand{\BIBentrySTDinterwordspacing}{\spaceskip=0pt\relax}
\providecommand{\BIBentryALTinterwordstretchfactor}{4}
\providecommand{\BIBentryALTinterwordspacing}{\spaceskip=\fontdimen2\font plus
\BIBentryALTinterwordstretchfactor\fontdimen3\font minus
  \fontdimen4\font\relax}
\providecommand{\BIBforeignlanguage}[2]{{%
\expandafter\ifx\csname l@#1\endcsname\relax
\typeout{** WARNING: IEEEtran.bst: No hyphenation pattern has been}%
\typeout{** loaded for the language `#1'. Using the pattern for}%
\typeout{** the default language instead.}%
\else
\language=\csname l@#1\endcsname
\fi
#2}}
\providecommand{\BIBdecl}{\relax}
\BIBdecl

\bibitem{heymann2016neural}
J.~Heymann, L.~Drude, and R.~Haeb-Umbach, ``Neural network based spectral mask
  estimation for acoustic beamforming,'' in \emph{IEEE International Conference
  on Acoustics, Speech and Signal Processing (ICASSP)}, 2016, pp. 196--200.

\bibitem{erdogan2016improved}
H.~Erdogan, J.~Hershey, S.~Watanabe, M.~Mandel, and J.~Le~Roux, ``Improved
  {MVDR} beamforming using single-channel mask prediction networks,'' in
  \emph{Interspeech}, 2016, pp. 1981--1985.

\bibitem{boeddeker2018exploring}
C.~Boeddeker, H.~Erdogan, T.~Yoshioka, and R.~Haeb-Umbach, ``Exploring
  practical aspects of neural mask-based beamforming for far-field speech
  recognition,'' in \emph{IEEE International Conference on Acoustics, Speech
  and Signal Processing (ICASSP)}, 2018, pp. 6697--6701.

\bibitem{barker2015third}
J.~Barker, R.~Marxer, E.~Vincent, and S.~Watanabe, ``The third {‘CHiME’}
  speech separation and recognition challenge: Dataset, task and baselines,''
  in \emph{IEEE Automatic Speech Recognition and Understanding Workshop
  (ASRU)}, 2015, pp. 504--511.

\bibitem{barker2018fifth}
J.~Barker, S.~Watanabe, E.~Vincent, and J.~Trmal, ``The fifth {'CHiME'} speech
  separation and recognition challenge: Dataset, task and baselines,'' in
  \emph{Interspeech}, 2018, pp. 1561--1565.

\bibitem{yoshioka2015ntt}
T.~Yoshioka, N.~Ito, M.~Delcroix, A.~Ogawa, K.~Kinoshita, M.~Fujimoto, C.~Yu
  \emph{et~al.}, ``The {NTT} {CHiME-3} system: {Advances} in speech enhancement
  and recognition for mobile multi-microphone devices,'' in \emph{IEEE Workshop
  on Automatic Speech Recognition and Understanding (ASRU)}, 2015, pp.
  436--443.

\bibitem{chen2018building}
S.-J. Chen, A.~S. Subramanian, H.~Xu, and S.~Watanabe, ``Building
  state-of-the-art distant speech recognition using the chime-4 challenge with
  a setup of speech enhancement baseline,'' in \emph{Interspeech}, 2018, pp.
  1571--1575.

\bibitem{menne2019investigation}
T.~Menne, R.~Schl{\"u}ter, and H.~Ney, ``Investigation into joint optimization
  of single channel speech enhancement and acoustic modeling for robust asr,''
  in \emph{IEEE International Conference on Acoustics, Speech and Signal
  Processing (ICASSP)}, 2019, pp. 6660--6664.

\bibitem{fujimoto2019one}
M.~Fujimoto and H.~Kawai, ``One-pass single-channel noisy speech recognition
  using a combination of noisy and enhanced features.'' in \emph{Interspeech},
  2019, pp. 486--490.

\bibitem{vincent2006performance}
E.~Vincent, R.~Gribonval, and C.~F{\'e}votte, ``Performance measurement in
  blind audio source separation,'' \emph{IEEE Transactions on Audio, Speech,
  and Language Processing (TASLP)}, vol.~14, no.~4, pp. 1462--1469, 2006.

\bibitem{cappe1994elimination}
O.~Cappe, ``Elimination of the musical noise phenomenon with the ephraim and
  malah noise suppressor,'' \emph{IEEE transactions on Speech and Audio
  Processing}, vol.~2, no.~2, pp. 345--349, 1994.

\bibitem{lyon1983computational}
R.~Lyon, ``A computational model of binaural localization and separation,'' in
  \emph{IEEE International Conference on Acoustics, Speech, and Signal
  Processing (ICASSP)}, vol.~8, 1983, pp. 1148--1151.

\bibitem{boll1979suppression}
S.~Boll, ``Suppression of acoustic noise in speech using spectral
  subtraction,'' \emph{IEEE Transactions on acoustics, speech, and signal
  processing}, vol.~27, no.~2, pp. 113--120, 1979.

\bibitem{garofalo2007csr}
J.~Garofalo, D.~Graff, D.~Paul, and D.~Pallett, ``{CSR-I (WSJ0) Complete},''
  \emph{Linguistic Data Consortium, Philadelphia}, 2007.

\bibitem{allen1979image}
J.~Allen and D.~Berkley, ``Image method for efficiently simulating small-room
  acoustics,'' \emph{The Journal of the Acoustical Society of America},
  vol.~65, no.~4, pp. 943--950, 1979.

\bibitem{kinoshita2020improving}
K.~Kinoshita, T.~Ochiai, M.~Delcroix, and T.~Nakatani, ``Improving noise robust
  automatic speech recognition with single-channel time-domain enhancement
  network,'' in \emph{IEEE International Conference on Acoustics, Speech and
  Signal Processing (ICASSP)}, 2020, pp. 7009--7013.

\bibitem{luo2019conv}
Y.~Luo and N.~Mesgarani, ``{Conv-TasNet}: Surpassing ideal time--frequency
  magnitude masking for speech separation,'' \emph{IEEE/ACM Transactions on
  Audio, Speech, and Language Processing (TASLP)}, vol.~27, no.~8, pp.
  1256--1266, 2019.

\bibitem{pariente2020asteroid}
M.~Pariente, S.~Cornell, J.~Cosentino, S.~Sivasankaran, E.~Tzinis,
  J.~Heitkaemper, M.~Olvera \emph{et~al.}, ``Asteroid: the {PyTorch}-based
  audio source separation toolkit for researchers,'' in \emph{Interspeech},
  2020, pp. 2637--2641.

\bibitem{roux2019sdr}
J.~L. Roux, S.~Wisdom, H.~Erdogan, and J.~Hershey, ``{SDR} -- half-baked or
  well done?'' in \emph{IEEE International Conference on Acoustics, Speech and
  Signal Processing (ICASSP)}, 2019, pp. 626--630.

\bibitem{kingma2015adam}
D.~Kingma and J.~Ba, ``Adam: A method for stochastic optimization,'' in
  \emph{International Conference on Learning Representations (ICLR)}, 2015.

\bibitem{pascanu2013difficulty}
R.~Pascanu, T.~Mikolov, and Y.~Bengio, ``On the difficulty of training
  recurrent neural networks,'' in \emph{International Conference on Machine
  Learning (ICML)}, 2013, pp. 1310--1318.

\bibitem{hinton2012deep}
G.~Hinton, L.~Deng, D.~Yu, G.~Dahl, A.-r. Mohamed, N.~Jaitly, A.~Senior
  \emph{et~al.}, ``Deep neural networks for acoustic modeling in speech
  recognition: The shared views of four research groups,'' \emph{IEEE Signal
  Processing Magazine}, vol.~29, no.~6, pp. 82--97, 2012.

\bibitem{povey2011kaldi}
D.~Povey, A.~Ghoshal, G.~Boulianne, L.~Burget, O.~Glembek, N.~Goel,
  M.~Hannemann \emph{et~al.}, ``The {Kaldi} speech recognition toolkit,'' in
  \emph{IEEE Automatic Speech Recognition and Understanding Workshop (ASRU)},
  2011.

\bibitem{povey2016purely}
D.~Povey, V.~Peddinti, D.~Galvez, P.~Ghahremani, V.~Manohar, X.~Na, Y.~Wang,
  and S.~Khudanpur, ``Purely sequence-trained neural networks for {ASR} based
  on lattice-free {MMI},'' in \emph{Interspeech}, 2016, pp. 2751--2755.

\bibitem{vincent2017analysis}
E.~Vincent, S.~Watanabe, A.~A. Nugraha, J.~Barker, and R.~Marxer, ``An analysis
  of environment, microphone and data simulation mismatches in robust speech
  recognition,'' \emph{Computer Speech \& Language}, vol.~46, pp. 535--557,
  2017.

\end{thebibliography}

\end{document}